\DeclareMathOperator*{\argmin}{argmin} 
\newtheorem{theorem}{Theorem}
\newtheorem{lemma}{Lemma}
\newtheorem{problem}{Problem}
\newtheorem{remark}{Remark}
\title{\LARGE \bf
 Privacy Constrained Information Processing
}
\author{Emrah Akyol, C\'edric Langbort, Tamer Ba\c{s}ar
\thanks{Authors are with the Coordinated Science Laboratory, University of Illinois at Urbana-Champaign, 1308 West Main Street, Urbana, IL 61801, USA
        {\tt\small \{akyol, langbort, basar1\}@illinois.edu}}   
\thanks{This work was supported by AFOSR MURI Grant FA9550-10-1-0573.}
}
\begin{document}

\maketitle
\thispagestyle{empty}
\pagestyle{empty}

\begin{abstract}
 This paper studies communication scenarios where the transmitter  and the receiver have different objectives due to privacy concerns, in the context of a variation of the strategic information transfer (SIT) model of Sobel and Crawford. We first formulate the problem as the minimization of a common distortion by the transmitter and the receiver subject to a privacy constrained transmitter. We show the equivalence of this formulation to a Stackelberg equilibrium of the SIT problem. Assuming an entropy based privacy measure, a quadratic distortion measure and jointly Gaussian variables, we characterize the Stackelberg equilibrium. Next, we consider asymptotically optimal compression at the transmitter which inherently provides some level of privacy, and study equilibrium conditions.  We finally analyze the impact of the presence of an average power constrained Gaussian communication channel between the transmitter and the receiver on the equilibrium conditions.  
%
%
%

  \end{abstract}

\section{Introduction}
This paper studies communication scenarios where the transmitter  and the receiver  have different objectives due to privacy concerns of the transmitter. Consider, for example,   the communication between a transmitter and a receiver, where the common objective of both agents is to minimize some objective function. However, the transmitter has an additional objective: Convey as little (accurate) information as possible about some privacy related information- correlated with the transmitted message- since the reconstruction at the receiver is reported into databases visible to other parties (government agencies, police, etc.). Obviously, the receiver is {\it oblivious} to this objective, i.e., privacy is not a common goal. Then, what kind of transmitter and receiver mappings (encoders and decoders) yield equilibrium conditions? How do compression at the transmitter or the presence of a noisy channel impact such equilibria? 

Such problems where better informed transmitter communicates with a receiver who makes the ultimate decision concerning both agents have been considered in the economics literature under the name of  ``cheap talk" or strategic information transfer (SIT), see e.g.,  \cite{crawford1982strategic, battaglini2002multiple} and the references therein. The SIT problem\cite{crawford1982strategic}, involves settings where the private information, available only to the transmitter, affects the  transmitter utility function. The receiver utility does not depend on this private information and thus is different from that of  the transmitter. The objective of the agents, the transmitter and the receiver, is to maximize their respective utility functions.  One of the main results of  \cite{crawford1982strategic} is that,  all Nash equilibrium  points can be achieved by a quantizer as a transmitter strategy.  Here, motivated by the conventional communication systems design, we analyze the Stackelberg equilibrium \cite{BasarBook}, where the receiver knows the encoding mappings and optimizes its decoding function accordingly. This fundamental difference between the two problem settings enables in the current case the use of Shannon theoretic arguments to study the fundamental limits of compression and communication in such {\it strategic} settings. In \cite{farokhi2014gaussian}, the set of  Stackelberg equilibria was studied for estimation with biased sensors. Here, we study the communication and compression with privacy constraints in the same context. 
\begin{figure*}
\centering
\includegraphics[scale=0.4]{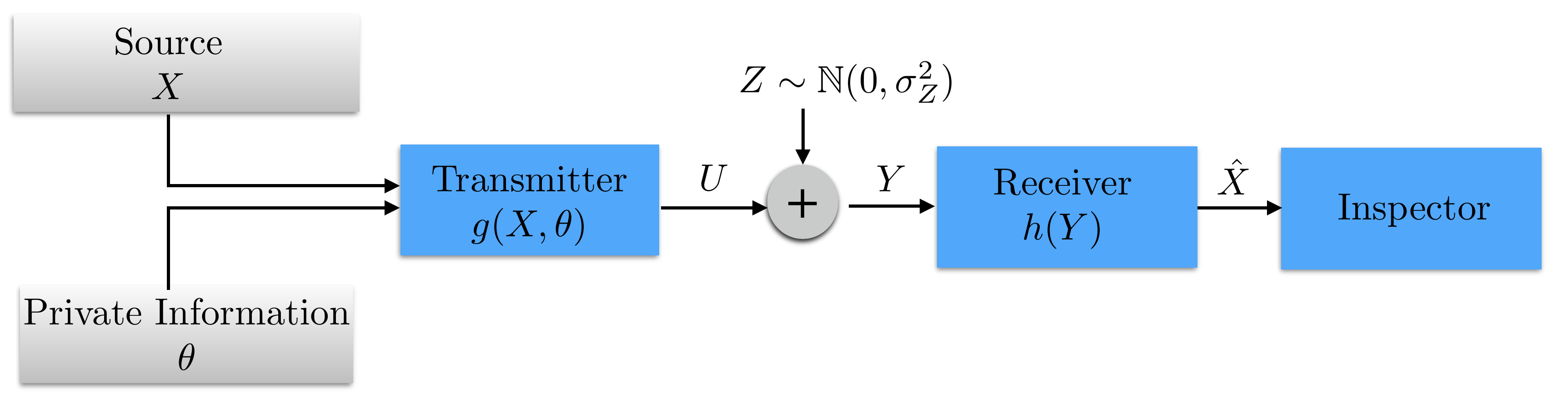}
\caption{The problem setting}
\label{fig:1}
\end{figure*}

The privacy considerations have recently gained renewed interest, see e.g., \cite{dwork2011differential,mcsherry2007mechanism,lalitha} and the references therein. 
In  \cite{yamamoto1988rate,yamamoto2}, Yamamoto studied a compression problem similar to the one considered here: find an encoder  such that there exists a decoder  that guarantees a distortion no larger than $D_C$ when measured with $\rho_C$ and at the same time cannot be smaller than $D_P$, when measured with $\rho_P$ in conjunction with {\it any other} decoder.  In \cite{lalitha}, Yamamoto's result was extended to some special cases to analyze the privacy-utility tradeoff in databases. 

In this paper, we explicitly study the equilibrium conditions under  transmitter's privacy constraints. The contributions of this paper are: 
\begin{itemize}
\item We first formulate the problem, that involves minimization of a global objective by the encoder and the decoder subject to a privacy constraint  measured by a different function. 
\item Assuming an entropy based privacy measure and quadratic distortion measure, we characterize the achievable distortion-privacy region with or without compression at the transmitter. 
\item We  study the impact of the presence of an average power constrained Gaussian communication channel on the privacy-distortion trade-off. 
\end{itemize}

%

%

\section{Preliminaries}
\subsection{Notation}
Let $\mathbb R$ and $\mathbb R^+$ denote the respective sets of real numbers and positive real numbers. Let  $\mathbb E(\cdot)$ denote the expectation operator.  The Gaussian density with mean $ \mu$ and variance  $\sigma^2$ is denoted as $\mathcal N( \mu,\sigma^2)$.  All logarithms in the paper are natural logarithms and may in general be complex valued, and the integrals are, in general, Lebesgue integrals.  Let us define  ${\cal {S}}$ to denote the set of Borel measurable, square integrable  functions $\{ { f}: \mathbb R \rightarrow \mathbb R$\}. For information theoretic quantities, we use standard notations as, for example,  in \cite{CoverBook}. 
We let $H(X)$ and $I(X;Y)$  denote the entropy of a discrete random variable $X$(or differential entropy if $X$ is continuous), and the mutual information between the random variables $X$ and $Y$, respectively.

\subsection{Setting-1: Simple Equilibrium}
We consider the general communication system whose block diagram is shown in Figure 1. The source $X$ and private information $\theta$ are mapped into ${Y}\in  \mathbb R$   which is fully determined by the conditional distribution $p(\cdot| x,\theta)$. For the sake of brevity, and with a slight abuse of notation, we refer to this as a stochastic mapping  $Y= g(X, \theta)$ so that 
\begin{equation}
\mathbb P( g(X, \theta)\in {\cal Y})=\int \limits_{ y'\in {\cal Y}} p( y'| x, \theta)\mathrm d x \mathrm d \theta\,\,\quad  \forall {\cal Y} \subseteq \mathbb R
\end{equation}
 holds almost everywhere in $X$ and $\theta$. 
Let the set of all such mappings be denoted by $\Gamma$ (which has a one-to-one correspondence to the set of all the conditional distributions that construct the transmitter output $Y$). 

The receiver produces an estimate of the source $ {\hat X}$ through a mapping $ h \in {\cal S}$ as $ {\hat X}= h(Y)$. An inspector observes the estimate of the receiver, aims to learn about the private information $\theta$, i.e., minimize $H(\theta|\hat X)$.  Note that the joint statistics of the random variables is common knowledge. The common objective of the transmitter and the receiver is 
to minimize end-to-end distortion measured by a given distortion measure $\rho_C$ as
\begin{equation}
D_C=\mathbb E\{\rho_C( X, {\hat X})\}
\end{equation}
over the mappings $ g, h$  subject to a privacy constraint: 
\begin{equation}\label{constraint}
\mathbb E\{\rho_P( \theta, {\hat X})\}\geq J_P
\end{equation}
{\it over only the encoding mapping $ g$} (the decoder is oblivious to the privacy objective). Here, the encoder aims to minimize $D_C$  in collaboration with the decoder (classical communication  problem). The encoder has another objective, however: to maximize privacy,  measured by, say $\rho_{P}$ or  to guarantee that this privacy is  not less than a given threshold, say $J_P \in \mathbb R^+$. Note that the decoder has no interest in finding out this information, or in satisfying or not satisfying this constraint.   
 This subtle difference, i.e., the fact that there is a mismatch between the objectives of the decoder and those of the encoder, motivates us to consider  this problem in a game theoretic setting (the SIT problem). 
 In game theoretic terms, we consider a {\it constrained Stackelberg game} where only one of the players (the encoder) is concerned with the constraint in (\ref{constraint}). Here, the encoder knows that the decoder will act to minimize the global cost $D_C$. 
 Hence, Player 1 (leader) is the encoder and it knows that Player 2 (follower, the decoder) acts to minimize $D_C$. 
  The leader (the encoder) acts to minimize $D_C$ subject to (\ref{constraint}) {\it knowing}  the decoder's objective. 
  In the following, we present this optimization problem formally:

\begin{problem}
Find $ g(\cdot,\cdot) \in \Gamma$ which minimizes 
$$\mathbb E\{\rho_C( X,  h^*( g( X,\theta)) )\}$$
subject to 
$$\mathbb E\{\rho_P( \theta,  h^*( g( X,\theta)) )\} \geq J_P$$
where 
$$ h^*(g)=\argmin \limits_{\substack h \in {\cal S}}\mathbb E\{\rho_C( X,  h( g( X, \theta))\}$$
\end{problem}

In this paper, we specialize to quadratic Gaussian settings, i.e., the source and the private information are jointly Gaussian, the distortion measure is  mean squared error and the privacy measure is conditional entropy. Particularly, this setting implies that  $(X,\theta) \sim \mathcal N( 0, R_{X\theta})$, where $ R_{X\theta}=\sigma_X^2 \left[ \begin{array}{cc}  1 & \rho \\ \rho & r \end{array} \right ]$; without loss of any generality we take $\rho \geq 0$, and naturally also $\rho^2\leq r$.  Further, the distortion and privacy measures are given as follows: 
\begin{align}
\rho_C(x,y)=(x-y)^2 
\end{align}
and 
\begin{align}
\rho_P(x,y)=-\log p(X|Y)
\end{align}
which results in conditional entropy as the privacy measure  $\mathbb E\{\rho_P(x,y)\}=H(X|Y)$. 
The following lemma is a simple consequence of the fact that the Gaussian distribution maximizes entropy subject to covariance constraints (see \cite{diggavi}). This lemma will be used to convert the equilibrium conditions related to compression and communication problems to a control theoretic framework (an optimization problem involving  second order statistics).
\begin{lemma}
At equilibrium, $Y, X, \theta$ are jointly Gaussian. 
\label{lemma:1}
\end{lemma}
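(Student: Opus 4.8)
The plan is to establish the lemma as a ``Gaussianization'' statement: given any (possibly non-Gaussian) encoder $g\in\Gamma$ together with its optimal decoder $h^*$, I would exhibit a jointly Gaussian triple attaining exactly the same distortion $D_C$ and privacy no smaller than the original $\mathbb E\{\rho_P(\theta,\hat X)\}$. Since an equilibrium is by definition an optimizer of Problem~1, showing that a Gaussian construction is never worse on either criterion shows the optimum can be (hence is) attained by jointly Gaussian $(X,\theta,Y)$. The first step is to notice that both objectives see the channel only through the decoder output $\hat X=h^*(g(X,\theta))=\mathbb E\{X\mid Y\}$: the cost is $\mathbb E\{(X-\hat X)^2\}$ and the privacy is $H(\theta\mid\hat X)$. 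Hence it suffices to study the joint law of $(X,\theta,\hat X)$, and we may take the transmitter output to be $\hat X$ itself with an identity decoder (a valid element of $\Gamma$). The structural fact I would exploit is the idempotence of conditional expectation: because $\hat X=\mathbb E\{X\mid Y\}$, we have $\mathbb E\{X\hat X\}=\mathbb E\{\hat X^{2}\}$, i.e. $\mathrm{Cov}(X,\hat X)=\mathrm{Var}(\hat X)$.

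Next I would let $(X',\theta',Y')$ be jointly Gaussian and zero-mean with covariance matched to that of $(X,\theta,\hat X)$, and keep the identity decoder so that $\hat X'=Y'$. Using the matched moments together with the identity above, the Gaussian MMSE estimate of $X'$ from $Y'$ is $\tfrac{\mathrm{Cov}(X',Y')}{\mathrm{Var}(Y')}Y'=Y'=\hat X'$, so the identity decoder is indeed optimal in the Gaussian world, and the resulting distortion is $\sigma_X^2-\mathrm{Var}(\hat X')=\sigma_X^2-\mathrm{Var}(\hat X)$, precisely the original distortion. Thus the Gaussianization preserves $D_C$ exactly.

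For the privacy term I would invoke the maximum-entropy principle cited before the lemma. Applying the Gaussian bound on conditional differential entropy together with Jensen's inequality over $\hat X$, and then $\mathrm{mmse}\le$ linear MMSE, gives $H(\theta\mid\hat X)\le\tfrac12\log\!\big(2\pi e\,\mathbb E\{\mathrm{Var}(\theta\mid\hat X)\}\big)\le\tfrac12\log\!\big(2\pi e\,(\sigma_X^2 r-\mathrm{Cov}(\theta,\hat X)^2/\mathrm{Var}(\hat X))\big)$. The right-hand side is exactly $H(\theta'\mid Y')$ for the matched-covariance Gaussian, so $H(\theta'\mid Y')\ge H(\theta\mid\hat X)$ and the privacy constraint can only be relaxed. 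Combining the two steps, the Gaussian construction is no worse in distortion and no worse in privacy, which yields the claim.

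I expect the distortion step to be the main obstacle, and the place where a careless argument breaks. A naive covariance match of the pair $(X,Y)$ makes the Gaussian strictly \emph{worse}, since its MMSE equals the linear MMSE, which only upper-bounds the true MMSE. The resolution is to Gaussianize the triple containing the estimate $\hat X$ rather than the raw output $Y$, and to use $\mathrm{Cov}(X,\hat X)=\mathrm{Var}(\hat X)$ so that the identity decoder remains optimal and the distortion is preserved \emph{exactly} instead of merely bounded; checking this self-consistency, and the correct direction of the entropy inequality, is the crux.
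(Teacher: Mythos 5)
Your argument is correct, and it is in the same spirit as the paper's justification, which consists of a single sentence appealing to the maximum-entropy property of the Gaussian under a covariance constraint (with a citation) and no further detail. What you add is precisely the part that citation does not cover: the distortion side of the Gaussianization. You are right that naively matching the covariance of $(X,\theta,Y)$ fails, because the Gaussianized MMSE equals the \emph{linear} MMSE of the original pair and so can only increase $D_C$; your fix---pass to the sufficient statistic $\hat X=\mathbb E\{X\mid Y\}$ (noting $\mathbb E\{X\mid\hat X\}=\hat X$ by the tower property, so the identity decoder is optimal for the surrogate encoder), Gaussianize the triple $(X,\theta,\hat X)$, and use $\mathrm{Cov}(X,\hat X)=\mathrm{Var}(\hat X)$ so that the identity decoder remains optimal in the Gaussian model and $D_C=\sigma_X^2-\mathrm{Var}(\hat X)$ is preserved exactly---is the genuinely nontrivial step, and it is sound. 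The privacy direction ($H(\theta\mid\hat X)\le\tfrac12\log(2\pi e\,\mathbb E\{\mathrm{Var}(\theta\mid\hat X)\})\le H(\theta'\mid Y')$ via the Gaussian entropy bound, Jensen, and mmse $\le$ linear MMSE) is standard and correctly oriented, since larger conditional entropy only relaxes the constraint. One small caveat: as written you establish that \emph{some} optimizer is jointly Gaussian, not that \emph{every} equilibrium is; to match the lemma's phrasing (and the paper's ``essential uniqueness'') you would add that the entropy bound is strict unless the conditional law of $\theta$ given $\hat X$ is Gaussian with constant variance, which forces joint Gaussianity up to the trivial reparametrizations the paper's footnote already sets aside.
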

Lemma \ref{lemma:1} ensures optimality of a linear decoder and hence $H(\theta|\hat X)=H(\theta|Y)$ since $\hat X$ is an invertible function of $Y$. Note that invertibility of the decoding mapping, and hence this simplification in the privacy constraint is a direct consequence of the Stackelberg equilibrium. The Nash equilibrium variant of the same problem, studied in \cite{crawford1982strategic} without a privacy constraint,  does not yield $H(\theta|\hat X)=H(\theta|Y)$, since the decoding mapping at equilibrium is not invertible (quantizer based).  Lemma \ref{lemma:1}  and the fact that maximizing $H(\xi_1|\xi_2)$ is equivalent to maximizing $\mathbb E\{( \xi_1-\mathbb E \{\xi_1|\xi_2\} )^2\}$ for jointly Gaussian $\xi_1, \xi_2$, enable the following reformulation of Problem 1: 
\begin{problem}
Find $Y= g(X, \theta)$ where $g(\cdot, \cdot) \in \Gamma $  minimizes 
$$D_C=\mathbb E\{( X-\mathbb E \{X|Y\} )^2\}$$
subject to 
$$\mathbb E\{( \theta-\mathbb E \{\theta|Y\} )^2\} \geq D_P$$
\end{problem}
In this paper,  we show the existence of such an equilibrium, and its essential uniqueness
\footnote{The optimal transmitter and receiver mappings are not strictly unique, in the sense that multiple trivially ``equivalent" mappings can be used to obtain the same MSE and privacy costs. For example, the transmitter can apply any invertible mapping $\gamma(\cdot)$ to $Y$ and the receiver applies $\gamma^{-1} (\gamma (Y))=Y$  the prior to $h(\cdot)$. To account for such trivial, essentially identical solutions, we use the term ``essentially unique".}.

%
\subsection{Setting-2: Compression}
Next, we consider the compression of the source $X$ subject to privacy constraints,  and analyze this problem from an information theoretic perspective. 
 
%
Formally, we consider an i.i.d. source $X^n$ and a private information sequence $\theta^n$ to be  compressed to $M \approx2^{nR}$ indices through $f_E$. The receiver applies a decoding function $f_D$ to generate the reconstruction sequence ${\hat X}^n$. Due to the {\it strategic} aspect of the problem, we have one distortion measure and one privacy measure. Similar to previous settings, we assume that the distortion is measured by MSE 
\begin{equation}\label{decoder}
D_C(f_E, f_D)=\frac{1}{n} \sum \limits_{i=1}^{n}\mathbb E \{(X_i-{\hat X}_i)^2\},
\end{equation}
and privacy is measured by conditional entropy.
\begin{equation}\label{encoder}
J_P(f_E, f_D)=H(\theta^n|{\hat X^n})
\end{equation}  

 A triple $(R, D_C, J_P)$ is called {\em  achievable} if for every $\delta>0$ and sufficiently large $n$, there exists a block code $(f_E, f_D)$ such that
\begin{eqnarray*}
\frac{1}{n}\log { M} & \leq & R + \delta \\
D_C(f_E, f_D) & \leq & D_C + \delta \\
J_P(f_E, f_D)& \leq & J_P + \delta \; .
\end{eqnarray*}
The set of achievable rate distortion triple $(R, D_C, J_P)$ is denoted here as ${\cal RD}$. The following theorem, whose proof directly follows from the arguments in \cite{yamamoto1988rate} for a general distortion measure $d_C$ and conditional entropy,  characterizes the achievable region ${\cal RD}_{S}$, i.e., converts the problem from an $n-$dimensional optimization to a single letter.  
 \begin{theorem}
\label{sc}
${\cal RD}$  is the convex hull of triples $(R, D_C, J_P)$ for
  \begin{align}
  R&\geq I(X,\theta ; Y) \nonumber 
  \end{align}
 for a conditional distribution $p(Y|X,\theta)$ and a deterministic decoding function $h:\mathbb R\rightarrow \mathbb R$ which satisfy
  \begin{align}
 D_C&\geq \mathbb E\{d_C(X, h(Y))\} \nonumber\\
J_P&\leq H(\theta|Y) \nonumber
  \end{align}
  \end{theorem}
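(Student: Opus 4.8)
The plan is to prove achievability and the converse separately, since Theorem \ref{sc} characterizes a rate-distortion region and these are established by the standard two-directional argument of rate-distortion theory. The claim asserts that the achievable region $\mathcal{RD}$ equals the single-letter region $\mathcal{RD}_S$ described by the auxiliary variable $Y$ with $R \geq I(X,\theta;Y)$, $D_C \geq \mathbb{E}\{d_C(X,h(Y))\}$, and $J_P \leq H(\theta|Y)$. Because the excerpt states this follows ``directly'' from \cite{yamamoto1988rate}, the real work is to adapt Yamamoto's joint source-coding-with-privacy argument to our setting, where the reconstruction variable $Y$ plays the role of the test channel output and the privacy is measured through $\theta$ rather than through a second copy of the source.

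First I would establish \textbf{achievability} via a standard random-coding argument. Fix a conditional distribution $p(Y|X,\theta)$ and a decoder $h$ achieving the single-letter bounds. Generate a codebook of $\approx 2^{nR}$ sequences $Y^n$ drawn i.i.d. according to the induced marginal $p(Y)$, and let the encoder $f_E$ map $(X^n,\theta^n)$ to the index of a jointly typical codeword. By the covering lemma, for $R > I(X,\theta;Y)$ a jointly typical $Y^n$ exists with high probability, so the rate constraint holds. Applying the decoder componentwise, $\hat{X}^n = h(Y^n)$, the typical-average lemma gives $D_C(f_E,f_D) \leq \mathbb{E}\{d_C(X,h(Y))\} + \delta$. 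For the privacy term, I would show $H(\theta^n|\hat{X}^n) \geq H(\theta^n|Y^n)$ (since $\hat{X}^n$ is a function of $Y^n$) and then lower-bound $\frac{1}{n}H(\theta^n|Y^n)$ by $H(\theta|Y) - \delta$ using the typicality of the chosen codeword pair; this is the step where I must be careful, because privacy is a \emph{lower} bound and the usual typicality machinery bounds distortion from above.

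Next I would prove the \textbf{converse}, showing any achievable triple lies in $\mathcal{RD}_S$. Given a code $(f_E,f_D)$ with index $M$, the rate bound follows from $\frac{1}{n}\log M \geq \frac{1}{n}H(M) \geq \frac{1}{n}I(X^n,\theta^n;M)$, and single-letterization proceeds by introducing a time-sharing random variable and invoking the chain rule together with the i.i.d.\ structure of the source, identifying $Y_i$ with $M$ augmented by past symbols. The distortion converse is routine. The privacy converse requires bounding $\frac{1}{n}H(\theta^n|\hat{X}^n)$ from above by a single-letter $H(\theta|Y)$; here the data-processing structure and convexity must be combined carefully so that the resulting region is exactly the \emph{convex hull} of single-letter points, which is why the theorem statement explicitly takes the convex hull.

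\textbf{The main obstacle} I anticipate is the privacy constraint, which reverses the direction of the usual rate-distortion inequalities: distortion is an upper bound one wants small, whereas privacy $J_P \leq H(\theta|Y)$ is a lower bound one wants large, and the two are coupled through the same auxiliary variable $Y$. Ensuring that the achievability codebook simultaneously keeps $D_C$ small while keeping $H(\theta^n|\hat{X}^n)$ from collapsing below $H(\theta|Y)$ --- and that the converse does not accidentally allow a larger privacy than any single-letter distribution permits --- is the delicate point. This is precisely where Yamamoto's construction must be invoked, and I would lean on the fact that the decoding map $h$ is deterministic so that $\theta \to Y \to \hat{X}$ forms a Markov chain, making $H(\theta|\hat{X}) \geq H(\theta|Y)$ automatic and reducing the privacy analysis to controlling $H(\theta|Y)$ alone.
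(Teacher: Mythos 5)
Your overall skeleton (random-coding achievability plus a single-letterization converse) is the same one the paper relies on: the paper gives no proof of its own and simply states that the result ``directly follows from the arguments in'' Yamamoto's 1988 paper, which is exactly the two-part argument you outline, so there is no divergence of approach to report. The problem is that your proposal defers precisely the two steps that constitute the actual content of Yamamoto's argument, and in one of them the sketch as written points in the wrong direction.

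Concretely: for achievability you correctly reduce the privacy analysis to lower-bounding $\frac{1}{n}H(\theta^n\mid Y^n)$ for the selected codeword, but you only remark that one ``must be careful'' there; this is the equivocation computation (counting the roughly $2^{nH(\theta|Y)}$ typical $\theta^n$ consistent with a typical codeword, together with a Fano-type correction for the atypical event), and it cannot be dispatched by the typical-average lemma, which only handles additive distortion functionals. More seriously, in the converse you assert that the privacy part ``requires bounding $\frac{1}{n}H(\theta^n|\hat{X}^n)$ from above by a single-letter $H(\theta|Y)$,'' but the standard identification $Y_i=(M,\theta^{i-1})$ yields $\frac{1}{n}\sum_i H(\theta_i\mid Y_i)=\frac{1}{n}H(\theta^n\mid M)\le \frac{1}{n}H(\theta^n\mid \hat{X}^n)$, i.e., the inequality comes out in the direction useful for achievability, not for the converse. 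The Markov chain $\theta\to Y\to\hat{X}$ that you invoke as making the privacy analysis ``automatic'' is exactly the source of this difficulty: the inspector observes $\hat{X}^n$, a possibly non-injective function of $Y^n$, so his equivocation can strictly exceed $H(\theta|Y)$, and you need an argument (or a different choice of auxiliary variable) showing that no code exploits this slack to leave the stated single-letter region. Your proposal names this as the delicate point but supplies no mechanism for resolving it, so as it stands the converse is a genuine gap rather than a routine adaptation of the rate and distortion bounds.
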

%
\subsection{Setting-3: Communication over Noisy Channel}
Finally, we consider an additive Gaussian noise between the transmitter and the receiver. This problem setting is shown in Figure 1, where the receiver observes $Y=U+Z$, where $Z \sim \mathbb N(0, \sigma_Z^2)$ is zero-mean Gaussian and distributed independent  of $X$ and $\theta$. Again, we focus on entropy based privacy and quadratic distortion measure and Gaussian variables.  The problem can then be reformulated as:

\begin{problem}
Find $U= g(X, \theta)$   that minimize
$$D_C=\mathbb E\{( X-\mathbb E \{X|Y\} )^2\}$$
subject to 
$$\mathbb E\{( \theta-\mathbb E \{\theta|Y\} )^2\} \geq D_P$$
where $Y=U+Z$.
\end{problem}

\section{Main Results}

\subsection{Simple Equilibrium}
Note that Lemma 1 does not provide the exact form of the function $g(\cdot,\cdot) \in \Gamma$, although it implies that $Y=X+\alpha \theta+S$ for some $\alpha \in \mathbb R$ and  $S\sim \mathbb N(0,\sigma_S^2)$ independent of $X$ and $\theta$.  The following observation involves the two extreme cases of this problem, i.e., the endpoints of the $D_C$-$D_P$ curve.   
\begin{lemma}
At maximum privacy, where $H(\theta|\hat X)=H(\theta)$, and at minimum privacy, where $H(\theta|\hat X)=H(\theta|X)$, the equilibrium is achieved at   $Y=X+\alpha \theta$  for some $\alpha\in \mathbb R$. In other words, at end points, there is no need to have the  noise term $S$. 
\label{lemma:2}
\end{lemma}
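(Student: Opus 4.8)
The plan is to exploit the explicit Gaussian structure that the earlier results already hand us. By Lemma~\ref{lemma:1} and the normalization noted just above the statement, any equilibrium transmitter output may be written as $Y=X+\alpha\theta+S$ with $S\sim\mathcal N(0,\sigma_S^2)$ independent of $(X,\theta)$, so the whole design collapses to choosing the pair $(\alpha,\sigma_S^2)$. First I would record the second-order statistics $\mathrm{Cov}(X,Y)=\sigma_X^2(1+\alpha\rho)$, $\mathrm{Cov}(\theta,Y)=\sigma_X^2(\rho+\alpha r)$, and $\mathrm{Var}(Y)=\sigma_X^2(1+2\alpha\rho+\alpha^2 r)+\sigma_S^2$, and then express both costs through the jointly-Gaussian identity $\mathrm{MMSE}(U\mid Y)=\mathrm{Var}(U)-\mathrm{Cov}(U,Y)^2/\mathrm{Var}(Y)$. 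This yields $D_C=\sigma_X^2-\sigma_X^4(1+\alpha\rho)^2/\mathrm{Var}(Y)$ for the distortion and a privacy MMSE $P=r\sigma_X^2-\sigma_X^4(\rho+\alpha r)^2/\mathrm{Var}(Y)$, reducing the lemma (via the equivalence between $H(\theta\mid\hat X)$ and the privacy MMSE used to pass from Problem~1 to Problem~2) to a two-variable statement about the two endpoints.

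For the maximum-privacy endpoint I would argue directly. Since $P\le r\sigma_X^2=\mathrm{Var}(\theta)$, maximal privacy $H(\theta\mid\hat X)=H(\theta)$ is attained exactly when $\mathrm{Cov}(\theta,Y)=0$, i.e. $\rho+\alpha r=0$, which forces $\alpha=-\rho/r$ irrespective of $\sigma_S^2$. Substituting this value, the numerator $\sigma_X^4(1+\alpha\rho)^2=\sigma_X^4(1-\rho^2/r)^2$ of the subtracted term in $D_C$ no longer depends on $\sigma_S^2$, whereas the denominator $\mathrm{Var}(Y)=\sigma_X^2(1-\rho^2/r)+\sigma_S^2$ is strictly increasing in $\sigma_S^2$. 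Hence $D_C$ is strictly increasing in $\sigma_S^2$ along the maximal-privacy constraint surface and is minimized at $\sigma_S^2=0$, giving $Y=X-(\rho/r)\theta$, a noiseless mapping.

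For the minimum-privacy endpoint, $H(\theta\mid\hat X)=H(\theta\mid X)$ is equivalent to $P=(r-\rho^2)\sigma_X^2=\mathrm{MMSE}(\theta\mid X)$; this is the least privacy that an optimal code ever incurs, because it is precisely the leakage of the estimator $Y=X$ that already attains the global floor $D_C=0$. The key step is the rigidity of $D_C=0$: this forces $X$ to be a deterministic invertible affine function of $Y$, so that $\alpha\theta+S$ must be $X$-measurable, and in the non-degenerate regime $r>\rho^2$ the independence of $S$ together with the strictly positive conditional variance of $\theta$ given $X$ forces $\alpha=0$ and $\sigma_S^2=0$. Thus $Y=X$ (up to the trivial scaling of the essential-uniqueness footnote), again with no noise, and the induced privacy is exactly $(r-\rho^2)\sigma_X^2$.

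The hard part will be the minimum-privacy case rather than the maximum-privacy one: one must first justify that the relevant end of the Pareto curve is the point where $D_C$ hits its global minimum $0$, so that $(r-\rho^2)\sigma_X^2$ is genuinely the smallest privacy attained by an optimal code, and only then invoke the $D_C=0$ rigidity to kill the noise term. I would also flag the degenerate boundary $r=\rho^2$, where $X$ and $\theta$ are perfectly correlated and the two endpoints coincide, as a separate trivial case; everywhere else the monotonicity argument and the $D_C=0$ rigidity deliver the claim cleanly.
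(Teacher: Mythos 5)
Your proposal is correct and follows essentially the same route as the paper's proof: at maximum privacy you force $\mathrm{Cov}(\theta,Y)=0$ (hence $\alpha=-\rho/r$) and observe that noise only inflates $D_C$, and at minimum privacy you identify the endpoint with the unconstrained optimum $Y=X$. The only difference is that you make explicit (via the second-order statistics and the $D_C=0$ rigidity argument) what the paper dispatches with ``obviously the optimal transmitter strategy is $Y=X$,'' which is a welcome but not substantively different elaboration.
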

\begin{proof}
At minimum privacy, obviously the optimal transmitter strategy is $Y=X$ which results in $D_C=0$. The only way to achieve maximum privacy, $H(\theta|\hat X)=H(\theta)$, is to render transmitter output $Y$ independent of $\theta$. Since variables are jointly Gaussian, this can be achieved by simply transmitting the prediction error, $Y=X+\alpha \theta$ where $\alpha=-\frac{\rho}{r} $ is MMSE prediction coefficient of $X$ from $\theta$. After prediction, the privacy constraint is satisfied  and adding noise only increases $D_C$, hence $Y=X+\alpha \theta$ is the optimal transmitter strategy. 
\end{proof}
In the following, we obtain auxiliary functional properties of $D_C$ and $D_P$ as a function of the encoding mapping $g(\cdot, \cdot)$ or equivalently $f_{Y|X, \theta}$ and the $D_C-D_P$ curve.

\begin{lemma}
$D_C$ and $D_P$ are concave functions of $f_{Y|X, \theta}$, and  $D_C(D_P)$ is an increasing, concave function of $D_P$. 
\end{lemma}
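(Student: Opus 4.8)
The plan is to treat the three assertions in turn, proving the two concavity-in-$f_{Y|X,\theta}$ statements first, since the curvature of the tradeoff curve will lean on them.

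\textbf{Concavity of $D_C$ and $D_P$ in the channel.} The key observation is that, for a \emph{fixed} decoder, the cost is linear in the channel. Writing the MMSE in its variational form,
$$D_C(f)=\inf_{h\in{\cal S}}\int \rho_C(x,h(y))\,p(x,\theta)\,f(y\mid x,\theta)\,\mathrm dx\,\mathrm d\theta\,\mathrm dy ,$$
the inner integral is, for each fixed $h$, an affine functional of $f=f_{Y|X,\theta}$, the source law $p(x,\theta)$ being held fixed. A pointwise infimum of a family of affine functionals is concave, so $D_C$ is a concave functional of $f$. The identical argument, with the squared reconstruction error of $\theta$ in place of $X$ (the MMSE reformulation used in Problem 2), shows that $D_P$ is concave in $f$. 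Since $\Gamma$ is convex, mixtures $\lambda f_1+(1-\lambda)f_2$ are admissible and these inequalities are meaningful.

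\textbf{Monotonicity of the tradeoff.} Regard the curve as the value of Problem 2, namely $D_C(D_P)=\inf\{D_C(f):f\in\Gamma,\ D_P(f)\ge D_P\}$. As the threshold $D_P$ grows the feasible set $\{f:D_P(f)\ge D_P\}$ can only shrink, so the infimum of $D_C$ over it cannot decrease; hence $D_C(D_P)$ is nondecreasing. Lemma 2 fixes the range, the curve running from the minimum-privacy point $(\mathbb E\{(\theta-\mathbb E\{\theta\mid X\})^2\},\,0)$ up to the maximum-privacy point $(\sigma_X^2 r,\,D_C^{\max})$.

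\textbf{Curvature and the main obstacle.} For the curvature I would first attempt a mixing argument. Let $f_1,f_2$ attain $D_C(D_P^{(1)})$ and $D_C(D_P^{(2)})$ and set $f_\lambda=\lambda f_1+(1-\lambda)f_2\in\Gamma$. Concavity of $D_P$ gives $D_P(f_\lambda)\ge\lambda D_P^{(1)}+(1-\lambda)D_P^{(2)}$, so $f_\lambda$ is feasible for the interpolated threshold and hence $D_C(\lambda D_P^{(1)}+(1-\lambda)D_P^{(2)})\le D_C(f_\lambda)$, while concavity of $D_C$ gives $D_C(f_\lambda)\ge\lambda D_C(f_1)+(1-\lambda)D_C(f_2)$. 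The crux, and the main obstacle, is that these two bounds point the \emph{same} way: one bounds the value at the interpolated threshold from above by $D_C(f_\lambda)$, the other only bounds $D_C(f_\lambda)$ from below by the chord, so they do not chain into a curvature inequality for $D_C(D_P)$. To close the gap I would abandon the purely variational route and invoke the explicit structure of Lemma 1, writing $Y=X+\alpha\theta+S$ with $S\sim\mathcal N(0,\sigma_S^2)$ and reducing $D_C$ and $D_P$ to rational functions of $(\alpha,\sigma_S^2)$ via the scalar Gaussian MMSE formulas. Because Lemma 2 shows the optimal transmitter is noiseless at both endpoints, I would argue the optimal boundary is traced by a single parameter, eliminate it, and read off the curvature of the resulting one-parameter curve; verifying the sign of its second derivative along this optimal boundary is the one genuinely computational—and decisive—step.
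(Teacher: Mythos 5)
Your handling of the first two assertions is correct and is essentially the paper's own argument: the paper proves concavity of $D_C$ and $D_P$ in $f_{Y|X,\theta}$ by fixing a single decoder $h^{(c)}$ for the mixture and invoking optimality of $h^{(i)}$ for $f^{(i)}$, which is exactly your ``pointwise infimum of affine functionals'' observation, and it proves monotonicity by the same shrinking-feasible-set remark. The obstacle you isolate for the third assertion is also genuine: concavity of $D_C$ in the channel bounds $D_C(f_\lambda)$ from \emph{below} by the chord, while feasibility of $f_\lambda$ only bounds the value function from \emph{above} by $D_C(f_\lambda)$, so the two inequalities do not chain. You should know that the paper's proof does not escape this. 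Its device is to lift to the joint conditional $f_{\hat\theta,Y|X,\theta}$ so that $D_P$ becomes linear rather than concave; but linearity only upgrades ``feasible for the interpolated threshold'' to ``meets it with equality,'' and the final step of the paper's chain, $D_C\bigl(c f^{(1)}+(1-c)f^{(2)}\bigr)=D_C\bigl(cD_P^{(1)}+(1-c)D_P^{(2)}\bigr)$, silently assumes the mixture is \emph{optimal} at that threshold, not merely feasible. So the gap you found is present in the published proof as well.

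Your proposed fallback, however, does not close the gap either, and here the issue is more serious than an unexecuted computation: if you carry out the one-parameter calculation you defer, the sign comes out opposite to the claim. Along the optimal noiseless family $Y=X+\alpha\theta$ of Theorem 2 one has $D_P=\sigma_X^2(r-\rho^2)/(1+2\alpha\rho+\alpha^2 r)$ and $D_C=\alpha^2 D_P$, so that $dD_C/dD_P=-\alpha(1+\alpha\rho)/(\rho+\alpha r)$, which increases monotonically from $0$ at $\alpha=0$ to $+\infty$ at $\alpha=-\rho/r$: the curve is \emph{convex} in $D_P$, not concave. Concretely, take $\sigma_X^2=r=1$ and $\rho=1/2$. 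The endpoints $(D_P,D_C)=(0.75,0)$ and $(1,0.25)$ lie on the optimal curve (Lemma 2, and a direct lower bound shows $D_C=\rho^2\sigma_X^2/r$ is unavoidable at maximum privacy), while $\alpha=-1/4$ yields the achievable point $(0.923,0.058)$, far below the chord value $0.173$; since achievable points only push the value function down, no argument can establish concavity of $D_C(D_P)$ as stated. At most the inverse function $D_P(D_C)$ is concave. The right conclusion of your analysis is therefore not ``the decisive computation remains to be done'' but that the computation refutes the claimed curvature, and the Lagrangian reasoning in the proof of Theorem 2 that leans on it (minimum $\lambda$ at maximum $D_P$) needs to be revisited accordingly.
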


\begin{proof}
Let  $Y^{(i)}$ be the random variables achieving $D_C(f^{(i)}_{Y|X,\theta})$ be characterized by $f_{Y|X, \theta}^{(i)}$, and $h^{(i)}$ for $i=1,2$.

For  $0\leq c\leq 1$ we define 
$$ f^{c}_{Y|X,\theta}=c  f^{(1)}_{Y|X,\theta}+(1-c)  f^{(2)}_{Y|X,\theta}.$$ Then, for any decoding function $h^{(c)}(y)$
\begin{align}
   &c  D_C(f^{(1)}_{Y|X,\theta})+(1-c) D_C(f^{(2)}_{Y|X,\theta})   \nonumber \\
   &= c \int f_{X, \theta}(x,\theta) f^{(1)}_{Y|{X,\theta}}(x,\theta)  (x-h^{(1)}(y))^2 \mathrm d x \mathrm d \theta \nonumber \\ &+ (1-c) \int f_{X,\theta}(x,\theta)    f^{(2)}_{Y|X,\theta}(x,\theta)   (x-h^{(2)}(y))^2 \mathrm d x \mathrm d \theta\nonumber \\
   & \leq c \int f_{X,\theta}(x,\theta)   f^{(1)}_{Y|X}(x,y) (x-h^{(c)}(y))^2 \mathrm d x  \mathrm d \theta\nonumber \\ &+ (1-c) \int f_{X,\theta}(x,\theta)   f^{(2)}_{Y|X,\theta} (x,\theta) (x-h^{(c)}(y))^2 \mathrm d x\mathrm d \theta \nonumber \\ 
   & =  \int f_{X, \theta}(x,\theta)  ( c  f^{(1)}_{Y|X,\theta}(x,\theta)  +(1-c) f^{(2)}_{Y|X,\theta}(x,\theta)  ) \nonumber \\
      &\quad \quad\quad\quad\quad\quad\quad\quad\quad\quad\quad (x-h^{(c)}(y))^2 \mathrm d x\mathrm d \theta \nonumber \\
   & =  \int f_{X, \theta}(x,\theta)  f^{c}_{Y|X,\theta}(x,\theta)(x-h^{(c)}(y))^2 \mathrm d x\mathrm d \theta \nonumber \\
   & = D_C(c  f^{(1)}_{Y|X,\theta}+(1-c)  f^{(2)}_{Y|X,\theta})
\end{align}
which shows the concavity of $D_C$ in $f_{Y|X,\theta}$. Following similar steps, we obtain concavity of $D_P$ in $f_{Y|X,\theta}$, i.e.,  we have 
\begin{equation}
c  D_P(f^{(1)}_{Y|X,\theta})+(1-c) D_P(f^{(2)}_{Y|X,\theta}) \leq D_P( f^{c}_{Y|X,\theta})
\end{equation}

Note that  $D_C(D_P)$ is  non-decreasing  since $D_C(D_P)$ as expressed in Problem 2, is a minimization over a constraint set, as $D_P$ increases, minimization is performed over a smaller set, hence  $D_C(D_P)$ is non-decreasing. 

Toward showing concavity of $D_C(D_P)$, we first note that one can show concavity of $D_C$ in $f_{\hat \theta,Y|X,\theta}$ where $\hat \theta$  is the inspector's estimate of $\theta$, following similar steps to the preceding analysis. Then, let $Y^{(i)},{\hat \theta}^{(i)}$ be the random variables achieving $D_C(D_P^{(i)})$ be characterized by $f_{{\hat \theta},Y|X, \theta}^{(i)}$ for $i=1,2$. We need to show 
\begin{equation}
D_C(c D_P^{(1)}+(1-c) D_P^{(2)}) \geq c  D_C(D_P^{(1)})+(1-c) D_C(D_P^{(2)})
\end{equation}
for all $0\leq c\leq 1$. 
\begin{align}
 &c  D_C(D_P^{(1)})+(1-c) D_C(D_P^{(2)})= \nonumber\\ &c  D_C(f^{(1)}_{{\hat \theta},Y|X,\theta})+(1-c) D_C(f^{(2)}_{{\hat \theta},Y|X,\theta})    \\
   & \leq D_C(c  f^{(1)}_{{\hat \theta},Y|X,\theta}+(1-c)  f^{(2)}_{{\hat \theta},Y|X,\theta}) \\
   &= D_C(c D_P^{(1)}+(1-c) D_P^{(2)})
\end{align}
where the last step is due to the fact that $D_P$ is  linear in $f_{\hat \theta,Y|X,\theta}$. 

  Since  both privacy and distortions measures are continuous, we only needed to show concavity, since monotonicity is a consequence of concavity and continuity.
 \end{proof}

Lemma \ref{lemma:2} describes the equilibrium conditions at the end points.   The following theorem provides the exact characterization of this equilibrium over the entire $D_P-D_C$ region. 

 \begin{theorem}\label{main}
For the quadratic Gaussian setting with entropy base privacy constraint, the (essentially) unique equilibrium is achieved by $g(X,\theta)=X+\alpha \theta$ and $h(Y)=\kappa Y$ where $\alpha$ and $\kappa$ are constants given as: 
\begin{align} \alpha=&-\frac{\rho}{r} \pm \sqrt{\left (1-\frac{\rho^2}{r}\right)\left(\frac{1}{r}- \frac{\sigma_X^2}{D_P}\right)} \label{eq:15} \\
 \kappa=&\frac{1+\alpha\rho}{1+\alpha^2r+2\alpha\rho}
 \end{align}
 \end{theorem}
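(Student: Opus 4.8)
The plan is to use Lemma~\ref{lemma:1} to collapse the functional optimization of Problem~1 onto two scalar parameters, an encoder ``direction'' $\alpha$ and a noise variance $\sigma_S^2$, solve that scalar problem, and read off the encoder/decoder coefficients. First I would invoke Lemma~\ref{lemma:1} and the remark following it to write any equilibrium encoder as $Y=X+\alpha\theta+S$ with $S\sim\mathcal N(0,\sigma_S^2)$ independent of $(X,\theta)$; the coefficient of $X$ is normalized to $1$ using the essential-uniqueness convention of the footnote, since any invertible rescaling of $Y$ is undone at the decoder. As $(X,\theta,Y)$ are then zero-mean jointly Gaussian, the $D_C$-optimal follower is the linear MMSE estimator $h(Y)=\kappa Y$ with $\kappa=\mathrm{Cov}(X,Y)/\mathrm{Var}(Y)$. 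From $R_{X\theta}$ I record
\begin{align*}
\mathrm{Var}(Y)&=\sigma_X^2\,(1+2\alpha\rho+\alpha^2 r)+\sigma_S^2,\\
\mathrm{Cov}(X,Y)&=\sigma_X^2\,(1+\alpha\rho),\qquad \mathrm{Cov}(\theta,Y)=\sigma_X^2\,(\rho+\alpha r),
\end{align*}
so that once $\sigma_S^2=0$ is established this $\kappa$ reduces to the expression in the statement.

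I would then express both costs through these statistics, $D_C=\sigma_X^2-\mathrm{Cov}(X,Y)^2/\mathrm{Var}(Y)$ and $\mathrm{Var}(\theta\mid Y)=\sigma_X^2 r-\mathrm{Cov}(\theta,Y)^2/\mathrm{Var}(Y)$, the latter being the privacy surrogate of Problem~2. Since the optimal-value function $D_C(D_P)$ is nondecreasing (shown in the concavity lemma preceding this theorem), the minimizer spends the least admissible privacy, so the constraint is active: $\mathrm{Var}(\theta\mid Y)=D_P$. Solving this for $\mathrm{Var}(Y)$ and substituting into $D_C$ removes $\sigma_S^2$ and yields $D_C=\sigma_X^2-(\sigma_X^2 r-D_P)\,(1+\alpha\rho)^2/(\rho+\alpha r)^2$, a function of $\alpha$ alone; for each $\alpha$ the constraint then pins $\sigma_S^2$, and admissibility requires $\sigma_S^2\ge 0$.

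The main obstacle is the interior analogue of Lemma~\ref{lemma:2}: showing the optimum has $\sigma_S^2=0$. Minimizing the reduced $D_C$ amounts to maximizing $\big((1+\alpha\rho)/(\rho+\alpha r)\big)^2$, whose derivative has the constant-sign numerator $\rho^2-r\le 0$, so the ratio is strictly monotone on each side of its pole $\alpha=-\rho/r$ and diverges there. At the pole $\mathrm{Cov}(\theta,Y)=0$, privacy is maximal, and the noise required to keep the active constraint is negative; concretely $\sigma_S^2\ge 0$ is equivalent to $(\rho+\alpha r)^2\ge(r-D_P/\sigma_X^2)(1+2\alpha\rho+\alpha^2 r)$, an upward parabola in $\alpha$ (leading coefficient $rD_P/\sigma_X^2>0$) nonnegative only outside its two roots, which straddle the pole. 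Hence the feasible set omits a neighborhood of the pole, the reduced $D_C(\alpha)$ has no interior stationary point, and its minimum is attained at a root of that parabola, i.e.\ exactly where $\sigma_S^2=0$. Setting $\sigma_S^2=0$ in the active constraint gives
\[
rD_P\,\alpha^2+2\rho D_P\,\alpha+\big(D_P+\sigma_X^2\rho^2-\sigma_X^2 r\big)=0,
\]
whose solution yields the two admissible values of $\alpha$ displayed in \eqref{eq:15} (the $\pm$ listing both boundary encoders meeting the constraint with zero noise, the distortion-minimizing one being selected for $\rho\ge0$). Back-substituting either root into $\kappa=\mathrm{Cov}(X,Y)/\mathrm{Var}(Y)$ with $\sigma_S^2=0$ returns the stated $\kappa$, and essential uniqueness follows because every step—Gaussianity, the active constraint, and the boundary selection—is forced.
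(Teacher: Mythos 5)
Your proposal is correct in substance and reproduces the paper's reduction (Lemma~\ref{lemma:1} to $Y=X+\alpha\theta+S$, linear MMSE follower, the same second-order statistics), but the decisive step---showing $\sigma_S^{2*}=0$---is argued by a genuinely different route. The paper forms the Lagrangian $J=D_C-\lambda D_P$, bounds the multiplier via the endpoint solutions of Lemma~\ref{lemma:2} ($\lambda\in[0,1/\rho^2]$), and checks the sign of the coefficient $\Xi$ multiplying the $\sigma_S^2$-bearing term over $\alpha\in[-\rho/r,0]$ so that $J$ is minimized at zero noise. You instead impose the active constraint directly (correctly justified by the monotonicity of $D_C(D_P)$ from the concavity lemma), use it to eliminate $\mathrm{Var}(Y)$---and hence $\sigma_S^2$---from $D_C$, and show the resulting one-variable problem attains its minimum on the boundary of the feasibility region $\sigma_S^2\ge0$, whose parabola roots straddle the pole $\alpha=-\rho/r$. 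This buys a cleaner argument: no multiplier bounds, no unjustified restriction of $\alpha$ to $[-\rho/r,0]$, and it avoids the paper's own sign inconsistencies in defining and expanding $\Xi$. Two points to tighten: the ratio $(1+\alpha\rho)/(\rho+\alpha r)$ is indeed monotone on each branch, but its \emph{square} is not---it has a stationary point at $\alpha=-1/\rho$ where it vanishes; since that is a minimum of the squared ratio (a maximum of the reduced $D_C$), your boundary conclusion survives, but you should also compare against the limiting value $\rho^2/r^2$ as $|\alpha|\to\infty$, which is dominated by the root adjacent to the pole (where the squared ratio exceeds $\rho^2/r^2$, as one sees from $f^2(\alpha_\pm)=(1-\rho^2/r\pm\rho s)^2/(r^2s^2)$ with $\alpha_\pm=-\rho/r\pm s$; this computation also confirms your claim that the $+$ root minimizes $D_C$ when $\rho\ge0$). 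Finally, your quadratic $rD_P\alpha^2+2\rho D_P\alpha+(D_P+\sigma_X^2\rho^2-\sigma_X^2 r)=0$ agrees with the paper's intermediate equation, but its roots are $-\rho/r\pm\sqrt{(1-\rho^2/r)\left(\sigma_X^2/D_P-1/r\right)}$, which matches \eqref{eq:15} only up to the sign inside the radical; since $D_P\le r\sigma_X^2$, it is this version that has a nonnegative discriminant, so the displayed formula in the theorem carries an inner-sign typo that your derivation (and in fact the paper's own intermediate step) corrects.
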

\begin{remark}
An interesting aspect of  the solution is that adding independent noise is {\it strictly} suboptimal in achieving the privacy-distortion trade-off. 
\end{remark}

\begin{proof}
First, we note that the optimal decoder mapping is 
\begin{equation}
h(Y)=\mathbb E\{X|Y\}
\end{equation}
regardless of the choice of encoder's policy $g(\cdot, \cdot)$. Hence, the problem simplifies to an optimization over the encoding mapping $g$. 

Noting that at equilibrium $Y$ and $X, \theta$ are jointly Gaussian, without loss of generality, we take  $Y=X+\alpha \theta+S$ where $S \sim \mathbb N(0, \sigma_S^2)$ is independent of $X$ and $\theta$. In the following, we find the value of $\alpha$ and $\sigma_S^2$ at equilibrium. First, let us express  $D_C$ and $D_P$ using standard estimation techniques:
\begin{align}
D_C(\alpha, \sigma_S^2)=&\mathbb E\{(X-\mathbb E\{X|Y\})^2\} \\
       =&\sigma_X^2 \left ( 1-\frac{(1+\alpha \rho)^2}{1+2\alpha \rho+\alpha^2 r+\frac{\sigma_S^2}{\sigma_X^2} }\right)
\end{align}

\begin{align}
D_P(\alpha, \sigma_S^2)=&\mathbb E\{(\theta-\mathbb E\{\theta |Y\})^2\} \\
      =&\sigma_X^2 \left ( r-\frac{(\rho+r\alpha )^2}{1+2\alpha \rho+\alpha^2 r+\frac{\sigma_S^2}{\sigma_X^2} }\right) \label{eq35}
\end{align}

\begin{figure}
\centering
\includegraphics[scale=0.4]{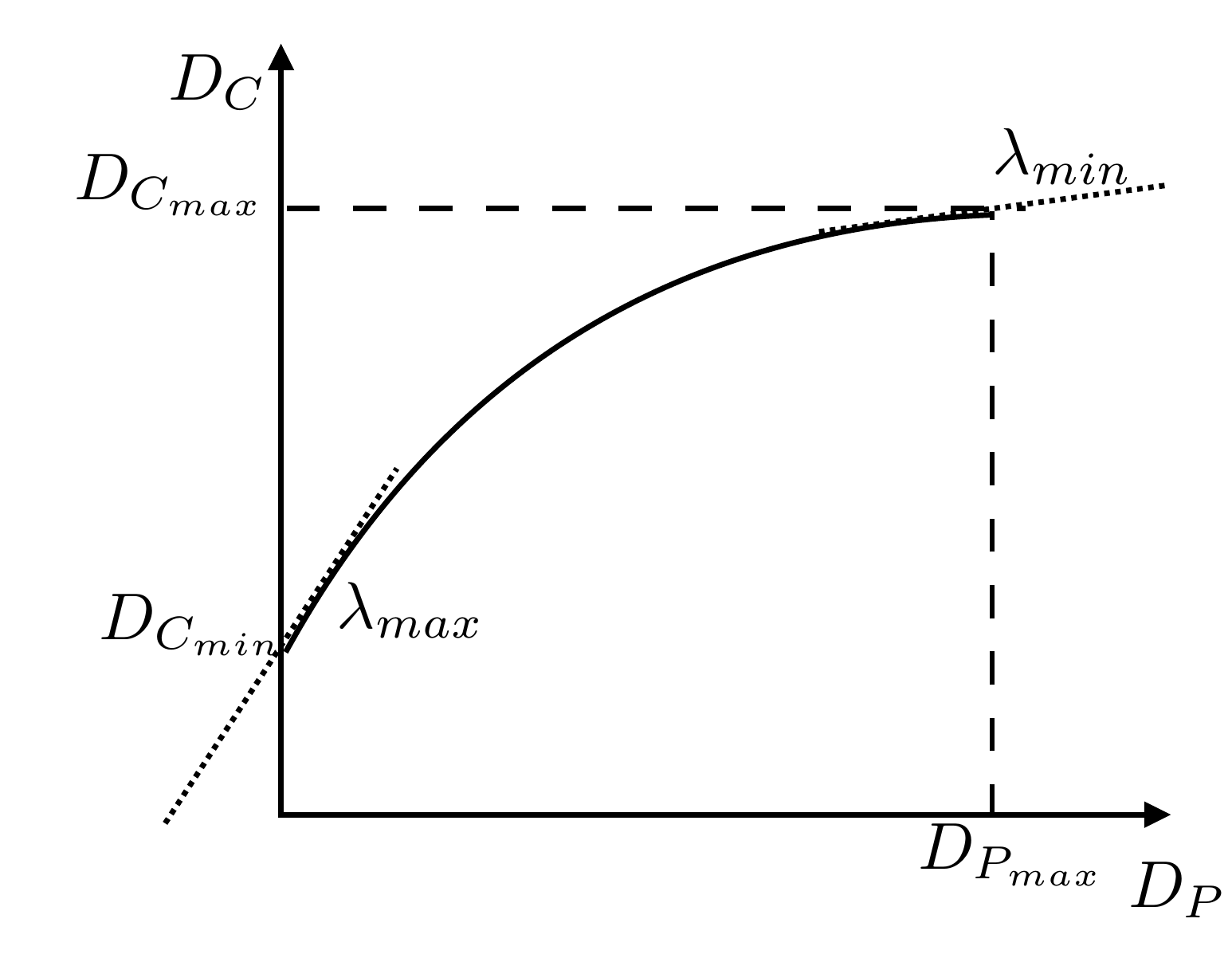}
\caption{$D_P-D_C$ curve}
\label{fig:2}
\end{figure}

Problem 1 can now be converted to an unconstrained \cite{BoydBook} minimization of the Lagrangian cost: 
\begin{equation}
 J(\alpha, \sigma_S^2)=D_C(\alpha, \sigma_S^2)-\lambda D_P(\alpha, \sigma_S^2)
\end{equation}
for where varying $\lambda \in \mathbb R^+$ provides solutions at different levels of privacy constraint $D_P$.
A set of necessary conditions for optimality can be obtained by applying K.K.T. conditions, one of which is  that $\lambda$ is the slope of the  curve:  
\begin{equation}
\lambda= \frac{d D_C(\alpha^*, \sigma_S^{2*})}{d D_P(\alpha^*, \sigma_S^{2*}) }
\end{equation}
at the optimal values of $\alpha$ and $\sigma_S^2$. Let us expand $J(\alpha, \sigma_S^2)$
\begin{equation}
J(\alpha, \sigma_S^2)= \sigma_X^2 (1-\lambda r)-\sigma_X^2 \left ( \frac{(1+\alpha \rho)^2-\lambda (\rho+r\alpha )^2}{1+2\alpha \rho+\alpha^2 r+\frac{\sigma_S^2}{\sigma_X^2} }\right) \nonumber
\end{equation}
The value of $\sigma_S^{2*}$ depends on the sign of $\Xi \triangleq \lambda (\rho+r\alpha )^2-(1+\alpha \rho)^2$, i.e., if $\Xi\geq0$, then $\sigma_S^{2*}=0$. In the following, we show that $\Xi\geq0$ for all  $\alpha  \in  [-\frac{\rho}{r},0]$, and hence $\sigma_S^{2*}=0$.

First, we note that the minimum value of $\lambda$ is reached at the maximum allowed $D_P$, as depicted in Figure 2. From Lemma \ref{lemma:2} for $D_P=D_{P_{max}}$, the solution implies that  $\sigma_S^{2*}=0$. Hence, $\Xi\geq0$ at $D_P=D_{P_{max}}$. Plugging the values, we obtain $\lambda \geq 0$. Following similar steps for $D_P=D_{P_{min}}$, we obtain $\lambda\leq \frac{1}{\rho^2}$. For  $\lambda \in [0,\frac{1}{\rho^2}]$ and $\alpha \in  [-\frac{\rho}{r},0]$, we have 
\begin{align}
\Xi \triangleq &(1+\alpha \rho)^2-\lambda (\rho+\alpha)^2 \nonumber\\
               \geq &(1+\alpha \rho)^2-\frac{1}{\rho^2} (\rho+\alpha)^2 \nonumber\\
                = &2 \alpha (\rho-\frac{r}{\rho})+\alpha^2(\rho^2-\frac{r^2}{\rho^2}) \nonumber\\
                = & \alpha (\rho-\frac{r}{\rho}) (2+\alpha(\rho+\frac{r}{\rho})).
\end{align}
Note that from Cauchy-Schwarz inequality, we have $r\geq \rho^2$ and noting that $-\frac{\rho}{r} \geq \alpha \geq 0$, we conclude $\Xi \geq 0$. Hence, $\sigma_S^{2*}=0$ for all values of $D_P$. Toward obtaining $\alpha^*$, 
%
%
plug  $\sigma_N^2=0$ into (\ref{eq35}), to obtain: 
 $$\left (1-\frac{r-\rho^2}{D_P/\sigma_X^2}\right)+2\alpha \rho+\alpha^2 r=0.$$
 The solution to this second order equation is simply 
 \begin{equation}
 \alpha^*=-\frac{\rho}{r} \pm \sqrt{\left (1-\frac{\rho^2}{r}\right)\left(\frac{1}{r}- \frac{\sigma_X^2}{D_P}\right)}
 \end{equation}
Both of these solutions (corresponding to $\pm$) satisfy the privacy constraint with equality, and the following one achieves lower $D_C$, and hence is the optimal solution:
 \begin{equation}
 \alpha^*=-\frac{\rho}{r} + \sqrt{\left (1-\frac{\rho^2}{r}\right)\left(\frac{1}{r}- \frac{\sigma_X^2}{D_P}\right)}
 \end{equation}
The optimal decoding mapping is 
\begin{align}
h(Y)=\mathbb E\{X|Y\}
=&\frac{1+\alpha \rho}{1+r\alpha^2+2\alpha \rho} Y.
\end{align}
\end{proof}

 \subsection{Problem-2}
Next, we consider compression of jointly  Gaussian source-private information  with entropy based privacy measure and MSE distortion. First, we observe that all equilibrium points are  achieved by a jointly Gaussian $X, \theta, Y$ and hence $Y$ can be written as $Y=X+\alpha \theta+N$ for some $\alpha \in \mathbb R$, and $N \sim\mathbb N(0, \sigma_N^2)$ is Gaussian and independent of $X$ and $\theta$. 
  
The proof of this statement follows  from the well-known property of Gaussian distribution achieving maximum entropy under a variance constraint and the steps in the proof of Theorem 1. 
Hence, the test channel achieving the rate-distortion function adds independent Gaussian noise (forward test channel interpretation of Gaussian rate distortion also holds  in this privacy constrained setting). Note that  the privacy constraint is always active in the simple equilibrium setting, and the equilibrium is at the boundary of the constraint set, i.e., we find optimal $\alpha$ by setting the privacy constraint equality. In the compression case, compression itself provides some level privacy inherently (it is evident from the forward channel interpretation of the RD function).  Hence,  the privacy constraint may not be active in the compression case, which yields $\alpha^*=0$. The following theorem characterizes the optimal rate-distortion-privacy trade-off. 
   \begin{theorem}
\label{gaussian}
For a given $D_P$, the space of $R-D_C$ is given as 
\begin{align}
R=&\log \left (1+\frac{\sigma_X^2}{\sigma_N^2} (1+\alpha^2r+2\alpha\rho) \right) \\
D_C=&\sigma_X^2 \left ( 1-\frac{(1+\alpha \rho )^2}{1+2\alpha \rho+\alpha^2 r+\frac{\sigma_N^2}{\sigma_X^2} }\right) 
\end{align}
 where 
 \begin{equation}
 \alpha=-\frac{\rho}{r}+ \Phi,
 \end{equation}
 and 
  \begin{equation}
\Phi= \min \left(\sqrt{\left (1-\frac{\rho^2}{r}\right)\left(\frac{1}{r}- \frac{1}{\frac{D_P}{\sigma_X^2} (1+\sigma_N^2)-r\sigma_N^2}\right)},\frac{\rho}{r} \right) \nonumber
 \end{equation}
 as a function of $\sigma_N^2$.
  \end{theorem}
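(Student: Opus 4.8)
The plan is to start from the single-letter region of Theorem \ref{sc} and specialize it to the Gaussian setting. By the maximum-entropy argument invoked just before the statement, the optimal test channel is Gaussian, so without loss of generality I would write $Y=X+\alpha\theta+N$ with $N\sim\mathcal N(0,\sigma_N^2)$ independent of $(X,\theta)$, and treat $(\alpha,\sigma_N^2)$ as the two free parameters. Every quantity in Theorem \ref{sc} then reduces to a function of $(\alpha,\sigma_N^2)$ through standard jointly-Gaussian estimation formulas: using $\mathbb E\{X^2\}=\sigma_X^2$, $\mathbb E\{\theta^2\}=r\sigma_X^2$ and $\mathbb E\{X\theta\}=\rho\sigma_X^2$, I would compute $\mathrm{Var}(Y)=\sigma_X^2(1+2\alpha\rho+\alpha^2 r)+\sigma_N^2$, $\mathrm{Cov}(X,Y)=\sigma_X^2(1+\alpha\rho)$ and $\mathrm{Cov}(\theta,Y)=\sigma_X^2(\rho+\alpha r)$.

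Next I would evaluate the three constraints. For the rate, $R=I(X,\theta;Y)=h(Y)-h(Y\mid X,\theta)=h(Y)-h(N)$, and plugging in the variances gives the stated logarithmic expression in $\sigma_X^2/\sigma_N^2$ and $(1+\alpha^2 r+2\alpha\rho)$. For the distortion, the optimal decoder is $h(Y)=\mathbb E\{X\mid Y\}$, so $D_C$ is the corresponding MMSE $\sigma_X^2-\mathrm{Cov}(X,Y)^2/\mathrm{Var}(Y)$, which is exactly the second displayed formula. Finally, as in the reformulation leading to Problem 2, the entropy constraint $J_P\le H(\theta\mid Y)$ is equivalent, for jointly Gaussian variables, to the MMSE constraint $\mathbb E\{(\theta-\mathbb E\{\theta\mid Y\})^2\}\ge D_P$, i.e. $\sigma_X^2 r-\mathrm{Cov}(\theta,Y)^2/\mathrm{Var}(Y)\ge D_P$.

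With $\sigma_N^2$ fixed (equivalently, the rate $R$ fixed), the remaining task is to minimize $D_C$ over $\alpha$ subject to this privacy constraint, which is the heart of the argument. The key new feature relative to Theorem \ref{main} is that the compression noise $\sigma_N^2$ already supplies privacy on its own: at $\alpha=0$ the privacy MMSE equals $\sigma_X^2\big(r-\rho^2/(1+\sigma_N^2/\sigma_X^2)\big)$. I would split into two cases. If this value already meets or exceeds $D_P$, the constraint is inactive and, since any nonzero $\alpha$ only increases $D_C$ without being needed, the optimum is $\alpha^*=0$; this is precisely the branch in which the $\min$ in $\Phi$ selects $\rho/r$, returning $\alpha=-\rho/r+\rho/r=0$. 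Otherwise the constraint is active, and I would set the privacy MMSE equal to $D_P$, clear denominators, and obtain a quadratic $\alpha^2 r d+2\rho d\,\alpha+\cdots=0$ (with $d=D_P/\sigma_X^2$) whose completed-square solution is $\alpha=-\rho/r\pm\Phi$; choosing the root that lands in $[-\rho/r,0]$ and yields the smaller $D_C$ gives the ``$+$'' branch, exactly as in the proof of Theorem \ref{main}.

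The main obstacle I anticipate is the case analysis establishing activity of the constraint together with the root selection, rather than the (routine) quadratic algebra. Concretely, I would argue that over the admissible interval $\alpha\in[-\rho/r,0]$ the privacy MMSE is monotone in $\alpha$, so that the active-constraint equation has a unique relevant root, and that the ``$+$'' root minimizes $D_C$; this parallels the $\Xi\ge0$ computation in the proof of Theorem \ref{main}. A secondary point worth isolating is the contrast with the noiseless setting: there adding independent noise was strictly suboptimal, whereas here the rate constraint forces $\sigma_N^2>0$, so the compression noise is effectively free privacy and the constraint can go slack, which is exactly what the $\min$ with $\rho/r$ encodes. Sweeping $\sigma_N^2$ over $(0,\infty)$ then traces the full $R$--$D_C$ curve at the prescribed privacy level $D_P$, and a brief check that the Gaussian extreme points already exhaust the convex hull in Theorem \ref{sc} completes the characterization.
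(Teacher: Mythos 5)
Your proposal is correct and follows essentially the same route as the paper's own proof: take the Gaussian test channel $Y=X+\alpha\theta+N$, express $R$, $D_C$, $D_P$ through second-order statistics, solve the quadratic in $\alpha$ when the privacy constraint is active (keeping the ``$+$'' root as the one with smaller $D_C$), and set $\alpha^*=0$ when the compression noise alone already satisfies the constraint. You actually supply slightly more detail than the paper (the explicit identification of the inactive case with the $\rho/r$ branch of the $\min$, and the rate computation $R=h(Y)-h(N)$), but the argument is the same.
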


  \begin{proof}
 We have $Y=X+\beta\theta+N$ for some $\alpha \in \mathbb R$ where $N$ is  zero-mean Gaussian with variance $\sigma_N^2$ and independent of $X$ and $\theta$. This representation yields, by standard estimation theoretic techniques, the following characterization of $R, D_C, D_P$ in terms of $\sigma_N^2$: 
    \begin{align}
R&=\frac{1}{2}\log \left (1+\frac{\sigma_X^2}{\sigma_N^2} (1+\alpha^2 r+2\alpha \rho) \right) \label{eq:1} \\
D_C&=\sigma_X^2 \left ( \frac{   \alpha ^2(r-\rho^2)+\frac{\sigma_N^2}{\sigma_X^2}   }  {1+2\alpha\rho+\alpha^2r+\frac{\sigma_N^2}{\sigma_X^2} }\right )  \label{eq:2}\\
D_P&=\sigma_X^2 \left ( r-\frac{(\rho+r\alpha )^2}{1+2\alpha \rho+\alpha^2 r+\frac{\sigma_N^2}{\sigma_X^2} }\right)  \label{eq:3}
  \end{align}

%
%

Following steps similar to the ones in the proof of Theorem 2, we can express $\alpha^*$ in terms of $D_P$, when the privacy constraint is active, as: 
 \begin{equation}
 \alpha^*=-\frac{\rho}{r} \pm \sqrt{\left (1-\frac{\rho^2}{r}\right)\left(\frac{1}{r}- \frac{1}{\frac{D_P}{\sigma_X^2} (1+\sigma_N'^2)-r\sigma_N^2}\right)}. \nonumber
 \end{equation}
 Form these solutions, the following achieves a lower value for $D_C$:
  \begin{equation}
 \alpha^*=-\frac{\rho}{r} + \sqrt{\left (1-\frac{\rho^2}{r}\right)\left(\frac{1}{r}- \frac{1}{\frac{D_P}{\sigma_X^2} (1+\sigma_N'^2)-r\sigma_N^2}\right)}.\nonumber
 \end{equation}
 When the privacy constraint is already satisfied, $\alpha^*=0$.


  \end{proof}
  

 \subsection{Problem-3}
We next focus on noisy communication settings, i.e., we assume there is an additive white Gaussian noise $Z\sim \mathbb N(0, \sigma_Z^2)$ as shown in Figure 2. The following theorem provides the  encoding and decoding mappings at the equilibirum. 


 \begin{theorem}\label{main2}
For the quadratic Gaussian communication setting with entropy based privacy constraint, the (essentially) unique equilibrium is achieved by $$U= \sqrt {\frac{P_T}{\sigma_X^2(1+2\alpha\rho+\alpha^2r)}}(X+\alpha \theta)$$ and $h(Y)=\kappa Y$ where $\alpha$ and $\kappa$ are constants given as: 
\begin{align}
 \kappa=&\frac{1+\alpha\rho}{1+\alpha^2r+2\alpha\rho}
 \end{align}
  and 
 \begin{equation}
 \alpha=-\frac{\rho}{r}+ \Phi,
 \end{equation}
 and 
  \begin{equation}
\Phi= \min \left(\sqrt{\left (1-\frac{\rho^2}{r}\right)\left(\frac{1}{r}- \frac{1}{\frac{D_P}{\sigma_X^2} (1+\sigma_N^2)-r\sigma_N^2}\right)},\frac{\rho}{r} \right) \nonumber
 \end{equation}
 \end{theorem}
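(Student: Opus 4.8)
The plan is to reduce the noisy-channel problem to the compression problem of Theorem~\ref{gaussian} by a scaling argument, after establishing that the transmitter should spend all of its power on a scalar linear combination of $X$ and $\theta$ and add no independent noise of its own. First I would invoke the maximum-entropy argument of Lemma~\ref{lemma:1} in this setting to conclude that at equilibrium $(U, X, \theta)$, and hence $(Y, X, \theta)$ with $Y = U + Z$, are jointly Gaussian. Consequently the optimal transmit signal takes the form $U = a(X + \alpha\theta) + S$ for some gain $a$, bias $\alpha$, and independent $S \sim \mathcal{N}(0, \sigma_S^2)$, and the optimal decoder is the linear MMSE estimator $h(Y) = \mathbb{E}\{X|Y\}$, which is invertible so that the privacy cost again reduces to $H(\theta|Y)$.

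The key structural step is a rescaling reduction. Since $Y = a(X+\alpha\theta) + (S+Z)$ and MMSE estimates are invariant under the invertible map $Y \mapsto Y/a$, both $D_C$ and $D_P$ depend on the transmit strategy only through the effective observation $X + \alpha\theta + N$ with $N = (S+Z)/a$ of variance $\sigma_N^2 = (\sigma_S^2 + \sigma_Z^2)/a^2$, which is exactly the model of Theorem~\ref{gaussian}. For a fixed bias $\alpha$ and the power budget $\mathbb{E}\{U^2\} = a^2\sigma_X^2(1+2\alpha\rho+\alpha^2 r) + \sigma_S^2 \leq P_T$, this effective noise variance is minimized by taking $\sigma_S^2 = 0$ and using the full power $a^2 = P_T/(\sigma_X^2(1+2\alpha\rho+\alpha^2 r))$, which is precisely the claimed encoder gain. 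Since $D_C$ is increasing in $\sigma_N^2$, and any additional privacy that a larger $\sigma_N^2$ would buy can instead be obtained by driving $\alpha$ toward its maximum-privacy value $-\rho/r$, I would then re-run the Lagrangian/KKT computation of Theorem~\ref{main} -- the sign analysis of the quantity $\Xi$ -- to certify that inflating $\sigma_N^2$ through self-noise $S$ or reduced power is strictly suboptimal along the entire trade-off. Hence $\sigma_S^2 = 0$ and the power constraint is active.

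With $\sigma_S^2 = 0$ and full power the effective noise is $\sigma_N^2 = \sigma_Z^2/a^2 = \sigma_Z^2\sigma_X^2(1+2\alpha\rho+\alpha^2 r)/P_T$, so the problem becomes literally an instance of Theorem~\ref{gaussian}. I would import the distortion and privacy expressions (\ref{eq:2})--(\ref{eq:3}) and the optimizer: setting the privacy constraint active and solving the resulting quadratic in $\alpha$ reproduces $\alpha = -\rho/r \pm \sqrt{(1-\rho^2/r)(1/r - \cdots)}$, and selecting the root with the smaller $D_C$ gives the $+$ branch $\alpha = -\rho/r + \Phi$. The truncation $\min(\cdot,\rho/r)$ in $\Phi$ encodes the regime where the privacy inherent in the channel noise already meets the constraint: there the constraint is inactive, $\Phi = \rho/r$, and $\alpha^* = 0$. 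The decoder gain $\kappa$ is then read directly off the linear estimator $\mathbb{E}\{X|Y\}$.

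The main obstacle I anticipate is the coupling introduced by the power constraint: because $\sigma_N^2$ itself depends on $\alpha$ through $a^2 = P_T/(\sigma_X^2(1+2\alpha\rho+\alpha^2 r))$, the clean picture of Theorem~\ref{gaussian} -- in which $\alpha$ is optimized against a \emph{free} parameter $\sigma_N^2$ -- becomes an implicit relation, and the $\Xi$-sign argument ruling out transmitter self-noise cannot simply be quoted verbatim. Verifying that full power with $\sigma_S^2 = 0$ survives this coupling, i.e. that buying privacy through the bias $\alpha$ dominates buying it through added noise even when varying $\alpha$ simultaneously shifts the effective SNR, is the crux of the proof.
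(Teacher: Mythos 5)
Your proposal is correct and follows essentially the same route as the paper: parametrize the transmit signal as a scaled linear combination $\beta(X+\alpha\theta)$, use the power constraint $P_T=\beta^2\sigma_X^2(1+2\alpha\rho+\alpha^2 r)$ to express the effective noise as $\sigma_Z^2/\beta^2$, and solve the resulting quadratic in $\alpha$ when the privacy constraint is active, taking the $+$ root and truncating to $\alpha^*=0$ when the constraint is inactive. The one place you go beyond the paper is in treating the elimination of transmitter self-noise and the activity of the power constraint as steps requiring a $\Xi$-type sign argument under the $\alpha$--$\sigma_N^2$ coupling: the paper simply asserts linearity (citing the classical optimality of linear maps plus the max-entropy property) and states without proof that the power constraint is always active, so the ``crux'' you flag at the end is a genuine issue, but one the paper itself leaves unaddressed rather than a defect specific to your argument.
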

\begin{proof}
First, we observe that linear mappings are optimal,  due to the well-known optimality of linear mappings (without the privacy constraint) \cite{goblick}, and the fact that jointly Gaussian  $U,X, \theta$ maximizes  $H(\theta|Y)$ with fixed second-order statistics. Next, we assume without any loss of generality, that $Y=\beta (X+\alpha \theta)+N$ for some $\beta \in \mathbb R^+$ and $\alpha \in \mathbb R$. Then, we have
\begin{align}
\mathbb E\{XY\}=\beta \sigma_X^2 (1+\alpha \rho),&\quad \mathbb E\{\theta Y\}=\beta \sigma_X^2 ( \rho+r\alpha) \nonumber \\
\mathbb E\{Y^2\}&=\beta^2 \sigma_X^2 (1+2\alpha \rho+r\alpha^2)+\sigma_N^2\nonumber
\end{align}
Hence, 
\begin{align}
D_P=&\sigma_X^2 \left ( r-\frac{(\rho+r\alpha )^2}{1+2\alpha \rho+\alpha^2 r+\frac{\sigma_N^2}{\beta^2\sigma_X^2} }\right) \\
D_C=&\sigma_X^2 \left ( 1-\frac{(1+\alpha \rho )^2}{1+2\alpha \rho+\alpha^2 r+\frac{\sigma_N^2}{\beta^2\sigma_X^2} }\right) 
\end{align}
Note that since $P_T=\mathbb E\{U^2\}=\beta^2\sigma_X^2(1+2\alpha \rho+\alpha^2 r )$, we can re-express $D_P$ as 
\begin{align}
D_P=&\sigma_X^2 \left ( r-\frac{\beta^2\sigma_X^2(\rho+r\alpha )^2}{P_T+\sigma_N^2 }\right) \\
       =&\sigma_X^2 \left ( r-\frac{(\rho+r\alpha )^2}{1+2\alpha \rho+\alpha^2 r }\frac{P_T}{P_T+\sigma_N^2 }\right) \\
        =&\sigma_X^2 \left (\frac{\sigma_N^2}{P_T+\sigma_N^2 } r+\frac{r-\rho^2}{1+2\alpha \rho+\alpha^2 r }\frac{P_T}{P_T+\sigma_N^2 }\right) \label{eq:45}
\end{align}

%

Note that the power constraint is always active, and when the privacy constraint is active, we have: 
 \begin{equation}
 \alpha^*\!=-\!\frac{\rho}{r} \!\pm \sqrt{\left (\!1\!-\!\frac{\rho^2}{r}\right)\left(\frac{1}{r}\!-\! \frac{1}{\frac{D_P}{\sigma_X^2} (\!1\!+\!\sigma_N^2/P_T)\!-\!r\sigma_N^2/P_T}\right)}. \nonumber
 \end{equation}
 From these solutions, the following achieves a lower  $D_C$:
  \begin{equation}
 \alpha^*\!=\!-\frac{\rho}{r} + \sqrt{\left (1\!-\!\frac{\rho^2}{r}\right)\left(\frac{1}{r}\!-\! \frac{1}{\frac{D_P}{\sigma_X^2} (\!1\!+\!\sigma_N^2/P_T)\!-\!r\sigma_N^2/P_T}\right)}.\nonumber
 \end{equation}
 When the privacy constraint is already satisfied, $\alpha^*=0$.
  \end{proof}


%
%

\section{Conclusion}
In this paper, we have addressed some fundamental problems associated with strategic communication in the presence of privacy constraints. Although the compression and communication problems  are inherently information theoretic, for entropy based privacy measure, MSE distortion and jointly Gaussian source and private information, the problem admits a control theoretic representation (optimization over second order statistics). We have explicitly characterized the equilibrium conditions for compression and communication under privacy constraints. Rather surprisingly, the simple equilibrium solution (without compression)  does not require addition of independent noise to satisfy the privacy constraints, as opposed to the common folklore in such problems. Some future directions include using results presented in this paper for decentralized stochastic control problems (see \cite{bansal1987stochastic}) with privacy constraints, extending the approach to vector and network settings, and finally investigating implications in economics. 
  \appendices

\bibliographystyle{IEEEbib}

\bibliography{/Users/eakyol/Dropbox/SIT/ref}

\end{document}